%
%
%
%
%
%
\documentclass[smallextended]{svjour3}     
\smartqed  
\usepackage{graphicx}
\usepackage{amsmath}
\usepackage{amssymb}
\usepackage{color}
\usepackage{subfigure}
%
%
%
%

%
\begin{document}
\pdfpagewidth=169.3 mm
\pdfpageheight=247.4 mm
\title{Solving the large syndrome calculation problem in steganography
\thanks{}
}
\subtitle{}


\author{Suah~Kim         \and
        Vasily~Sachnev \and
        Hyoung~Joong~Kim 
}


\institute{S. Kim and H.J. Kim \at
              Graduate School of Information Security, Korea University, Seoul, South Korea \\
              \email{suahnkim@gmail.com, khj-@korea.ac.kr}           
           \and
           V. Sachnev \at School of Information, Communications, and Electronic Engineering, The Catholic University of Korea, Bucheon, South Korea
                            \email{bassvasys@hotmail.com}  
}

\date{Received: date / Accepted: date}

\maketitle

\begin{abstract}
In error correction code based image steganography, embedding using large length codes have not been researched extensively. This is due to the fact that the embedding efficiency decreases as the length becomes sufficiently larger and the memory requirement to build the parity matrix for large code is almost infeasible. However, recent studies have demonstrated that the embedding efficiency is not as important as minimizing the distortion. In light of the finding, we propose a embedding method using a large length codes which does not have such a large memory requirement. The proposed method solves the problem with the large parity matrix by embedding in the polynomial domain as oppose to matrix domain, while keeping the computational complexity equal to the matrix based methods. Furthermore, a novel embedding code called low complexity distortion minimization (LCDM) code is also presented as an example.
%
\keywords{Steganography \and Distortion \and Polynomial code \and Generator polynomial}
\end{abstract}

\section{Introduction}
Image steganography is a data hiding technique with emphasis on the undetectability. In the past, numerous different linear error correction codes have been extensively used as an embedder in image steganography. The embedding is done like this: Linear codes provide the sender degrees of freedom to which modification can be made to the cover such that hidden message could not be be easily extracted. We refer to such modification to the cover as modifier vectors. The ideal modifier vector is the modifier vector which produces the best modified cover against steganlysis, a tool which detects whether an image has been modified or not. 

In general, an error correcting linear code can be implemented based on matrix domain or polynomial domain. Many important papers in steganography \cite{wetpaper,wetpaperimproved,largepayloads,lessdetectable,f5} discuss embedding using techniques in matrix domain as oppose to in polynomial domain. This is mostly due to the fact that the researchers in steganography are more familiar with matrix operations than the polynomial operations.

There is also no work which focus specifically on embedding using a large code, due to the fact that embedding efficiency is decreased as the code length becomes larger. However, recent studies have shown that embedding efficiency is not as important as the distortion minimization. With this in mind, the proposed paper investigates embedding using large code. 

However, syndrome calculation of a large code in matrix domain has a problem of the large parity matrix. Given that the code which is as large as the cover, the size of the parity matrix is too big for creating it.  

The paper propose a solution in which large syndrome calculation problem is by showing an efficient embedding method which can be applied in polynomial domain. In order to simplify the content, several key ideas from error correction coding is explained more in simple terms such that non-advanced reader can understand. 

For the rest of the paper, in Section 2, a literature review of the previous work is given. In Section 3, an improved implementation of embedding using polynomial code is presented. In Section 4, an example of a polynomial code called low complexity distortion minimization (LCDM) is shown. Finally, in section 5, a short example for embedding and extraction using LCDM is given. 

With regards to notations, the matrices are denoted by italicized and bolded uppercase letters, whereas polynomials are denoted by italicized uppercase letters. Each value corresponding to a coordinate in vectors or polynomials are denoted using italicized lowercase letters.
\section{Literature Review}
In this section, past embedding schemes are discussed. The first part summarizes the past works on matrix embedding schemes, whereas the second part focuses on the polynomial embedding schemes. 

One of the earliest idea of matrix embedding is suggest by Crandall \cite{crandall1998some}. Later on, Westfeld \cite{f5}'s paper showed that embedding via F5 implementation greatly reduce the amount of modification to the cover. Westfeld demonstrated the basic idea of syndrome coding; given a syndrome, there exists many modifier vectors such that all modifications to the cover using them result in the same syndrome. The ideal modifier cover is then defined to be the one with least modification. This basic idea was explored and generalized by many others like Winkler and Sch\"{o}nfeld \cite{winkler}.

Winkler and Sch\"{o}nfeld \cite{winkler} proposed a syndrome coding based on general binary linear codes $(n,k)$ for embedding. With the help from coset theory, they were able to manage to narrow down the search area for the modifying vector with a least Hamming weight from $2^{n}$ to $2^{k}$. Their method however needed all $2^{k}$ codewords; which they stored it in a look up table. Even though the improvement was significant, $2^{k}$ space requirement seemed unmanageable for large values of $k$. In addition to matrix embedding, Winkler and Sch\"{o}nfeld \cite{winkler} proposed a technique based on binary polynomial code. It used the idea of preflipping bits and finding the modifying vector with least Hamming weight exhaustively. Unlike parity matrix implementation which has a deterministic form, the polynomial implementation lack such tools and thus less efficient. They concluded that parity matrix is more efficient in terms of embedding complexity.

In realistic situations, there are elements in the cover that are not modifiable. To remedy this problem, wet paper code was developed independently. The main idea of the wet paper code was to extend current embedding techniques so that cover can be divided between locked and unlocked elements. Elements that are locked cannot be modified and unlocked elements are modifiable. The papers proposed by Fridrich et al. \cite{wetpaper,wetpaperimproved,largepayloads} explored wet paper code over non-shared selection channel. They argued that non-selection channel improved steganography security and was less vulnerable to steganalytic attacks. First, they pseudo-randomly created a matrix \textbf{\em H}, and depending on which elements are locked, submatrix \textbf{\em D} is derived by choosing corresponding columns from \textbf{\em H}. In the first paper \cite{wetpaper}, they solved a set of linear equations to find the ideal modifier vector. In the second paper \cite{wetpaperimproved}, they proposed using the meet-in-the-middle algorithm. In the third paper \cite{largepayloads}, they proposed a totally different method by finding the closest codeword from a modifier vector. They also proved that random linear codes provide good embedding efficiency and their relative embedding capacity densely covers the range of large payloads, making them ideal as an embedder \cite{largepayloads}.

In their second paper \cite{wetpaperimproved}, they explain that finding modifier vector with the least Hamming weight becomes exponentially complex for large $n$ and small $k$. Also, the space requirement of storing \textbf{\em H} is also a problem. To remedy the initial problem, they propose breaking the cover and message into smaller chunks; which effectively solved the space requirement of \textbf{\em H} as well. There are other papers \cite{rankmetriccodes} that proposed more efficient techniques to find the ideal modifier vector. 

Reed-Solomon codes are a special case of BCH (Bose Chaudhuri Hocquenghem) code. It is important to note that error correcting codes are developed so that hardware implementation is easy. This is not the case in steganography, as embedding step occurs from the software side. Fontaine and Galand \cite{reedsolomon} proposed an implementation using such codes for the wet paper coding. They showed that Reed-Solomon codes are optimal with respect to the number of locked positions. Their implementation used Lagrange interpolation and list decoding technique to optimally manage locked elements while finding all possible modifications to the cover.

Reinvestigating the problem of large look up table stated by Winkler and Sch\"{o}nfeld, Zhang et al. \cite{efficientembedder,fastbch} proposed data embedding using primitive binary BCH code. The method proposed creating two smaller look up tables to replace the $2^{k}$ table when primitive binary BCH code is used. The two smaller table is used to deterministically find solutions with modification vector with hamming weight 1 and 2. However, their method could only find up to Hamming weights of 4, and it was not clear if it it easy to extend the method for bigger Hamming weight modifier vectors.

In a different paper, Sachnev et al. \cite{lessdetectable} showed that the least Hamming weight modifier vector does not always produce better results against steganalysis. Their proposed method used a distortion function to determine the local optimum (i.e., ideal modifier vector chosen from modifier vector of weight up to 4).

Syndrome trellis code has been used as well. The algorithm proposed by Filler et al. \cite{filler2011minimizing} is based on the trellis code and it achieves better results when compared against other coding techniques. However, there is a significant tradeoff between the constraint length and the speed.

To summarize, there are two points to be made. 
\begin{itemize}

\item Many steganography techniques are based on matrix manipulation. The latest finding suggests that unlike the initial assumption, the least number of modification to the cover does not necessarily guarantee the best modified cover. The space requirement of generating large \textbf{\em H} and non-linear complexity problem of the cover embedding makes preference to block embedding.

\item Polynomial based techniques have not been explored in depth, and even the existing implementations are an exhaustive search and don't have a deterministic form to find the solutions. 
\end{itemize}

In the following section, we will propose a general polynomial embedding scheme based on polynomial codes that is efficient and can accommodate cover embedding. Then, a novel embedding code, low complexity distortion minimization (LCDM) code is proposed, which has an efficient distortion minimization process.

\section{Proposed Method: Improved Embedding Based on Polynomial Code}
When embedding using a large code, the space requirement and computational effort to finding the solution have to be considered.

Matrix based implementation methods suffer from the large parity matrix problem when using one large length code is used for embedding. For an image with $n$ pixels, using (n,k) linear code, the memory requirement for a binary parity matrix of size is $\frac{\text{matrix size} \times \text{1 bit} \times 1 \text{ Byte}}{8\text{ bit}}$ $=\frac{n\times (n-k)}{8}=\frac{n^2-n\times k}{8}$ Byte. On the other hand, polynomial based implementation only requires to store generating polynomial $G(x)$, which is only $\frac{\text{(degree of }G(x)+1)\times 1 \text{ bit} \times 1\text{ Btye} }{8 \text{ bit}}=\frac{n-k+1}{8}$ Byte. For an image size of $1000\times 700$, with embedding 0.1 bits per pixel, memory requirement of the parity matrix is 6.125 GB, whereas for generating polynomial, it is only 8.75 KB. It is clear that the difference in memory requirement for the polynomial method is much lower and feasible.


However, unlike matrix based methods, there are no known explicit formula for for finding the modification polynomial; earlier work by Winkler and Sch\"{o}nfeld, only gave an exhaustive method, which requires $2^n$ number of trials to find the solutions.

The proposed method improves upon Winkler and Sch\"{o}nfeld's work on the polynomial based implementation by giving an explicit formula for finding the modification polynomial, which finds $2^k$ solutions.  


\subsection{Setup}
Before describing the proposed method, we formally define polynomial code, in the definition of the error correction code, as a $(n,k)$ linear cyclic code, where all codewords are length of $n$ and divisible by generating polynomial $G(x)$, a polynomial with degree $n-k$. 

In the context of steganography, the syndrome, which is the remainder after dividing the cover polynomial by the $G(x)$, is the secret message of length $n-k$. 

In this paper, polynomials with coefficients from $Z_2$, denoted as $Z_2[x]$ is used to explain the concept. This idea can be easily extended to polynomials with coefficients from different Galois Fields.

Let message be $\textbf{\em M}=(m_1,\ldots,m_{n-k})\in Z_2^{n-k}$ and the original cover be $\textbf{\em J}=(j_1,\ldots,j_{n})$, where each $j_i$ takes discrete values from 0 to 255 (for 8 bit case) for an image with $n$ number of pixels. The cover is first converted into binary values using modulo 2:
 \begin{equation}
 \phi:Z_{256}^l \rightarrow Z_{2}^l \newline
 \end{equation}
 \begin{equation}
 \phi(j_1,\ldots,j_l)=(j_1\ \text{mod\ }2,\dots,j_l\ \text{mod\ }2)
 \end{equation}
The processed cover is $\textbf{\em V}=\phi(\textbf{\em J})$.
\\

A trivial bijective map from vector to polynomial representation is described as follow: $\sigma$ is a bijective map from binary vector $\textbf{\em A}\in Z_2^n$ to a polynomial over $Z_2$ with degree less than $n$ and vice versa for $\sigma^{-1}$.\\ For example, if $\textbf{\em A}=(a_1,\ldots,a_{l},0,\ldots,0)\in Z_2^n$,
 \begin{equation}
 \sigma(\textbf{\em A}) \mapsto a_{1}+\cdots+a_{l}\cdot x^{l-1}=A(x)
 \end{equation}
 \begin{equation}
 \sigma^{-1}(A(x)) \mapsto (a_1,\ldots,a_{l},0,\ldots,0)=\textbf{\em A}
 \end{equation}
where $l<n$, and `+' and `$\cdot$' represents addition and multiplication in $Z_2[x]$, respectively. 
\subsection{Embedding}
Embedding is done in four steps. In the first step, the base modifier polynomial $E_{base}(x)$ is evaluated. $E_{base}(x)$ is used as the basis for finding all possible modification polynomial, which would cause the syndrome to be the same as the intended message $M$. In the second step, set of all possible modifier polynomials are found. In the third step, the modifier polynomial which results in the lowest distortion is chosen. Lastly, the modification is reflected to the cover image.
\subsubsection{Base modifier polynomial}
Base modifier polynomial $E_{base}(x)$ is obtained using following equation using $E_{base}(x)$:
\begin{equation}
E_{base}(x)=rem\bigg(\frac{V(x)-M(x)}{G(x)}\bigg)
\end{equation}
where $rem$ is a function that evaluates the remainder after the long division, using operations from $Z_2[x]$. $E_{base}(x)$ is used as a base polynomial to find all other distinct solutions. 
\begin{definition}
\emph{rem}
Let $L(x), G(x)$ be polynomials from $Z_2[x]$ and suppose $L(x)=P(x)G(x)+R(x)$, where degree of $R(x)$ is less than degree of $G(x)$. Then, $rem\big(\frac{L(x)}{G(x)}\big)=R(x)$
\end{definition}
\begin{corollary}
Let $L(x), G(x)$ be polynomials from $Z_2[x]$ and degree of $L(x)$ is less than degree of $G(x)$. Then, $rem\big(\frac{L(x)}{G(x)}\big)=L(x)$
\end{corollary}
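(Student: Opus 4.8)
The plan is to apply Definition 1 (\emph{rem}) directly, the only non-cosmetic ingredient being the uniqueness of the quotient--remainder pair produced by polynomial long division over $Z_2$. First I would exhibit the required decomposition explicitly: since the definition asks for $L(x)=P(x)G(x)+R(x)$ with $\deg R(x)<\deg G(x)$, I simply take $P(x)=0$ and $R(x)=L(x)$, so that $L(x)=0\cdot G(x)+L(x)$. By the hypothesis $\deg L(x)<\deg G(x)$, this decomposition already meets the degree constraint imposed in Definition 1.

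Next I would argue that this is \emph{the} decomposition referred to in the definition of $rem$, i.e. that it is unique. This is the standard fact that $Z_2[x]$ is a Euclidean domain: for any $L(x)$ and any nonzero $G(x)$ there is exactly one pair $(P(x),R(x))$ with $L(x)=P(x)G(x)+R(x)$ and $\deg R(x)<\deg G(x)$, under the usual convention $\deg 0 = -\infty$. Granting uniqueness, the decomposition $L(x)=0\cdot G(x)+L(x)$ must coincide with the one in Definition 1, hence $rem\big(\tfrac{L(x)}{G(x)}\big)=R(x)=L(x)$, which is the claim.

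For completeness I would include the one-line uniqueness check, since it is the only place any reasoning occurs: if $P_1(x)G(x)+R_1(x)=P_2(x)G(x)+R_2(x)$ with both remainders of degree less than $\deg G(x)$, then $(P_1(x)-P_2(x))G(x)=R_2(x)-R_1(x)$; the right-hand side has degree strictly less than $\deg G(x)$, so the left-hand side must be the zero polynomial, forcing $P_1(x)=P_2(x)$ and then $R_1(x)=R_2(x)$. (Over $Z_2$ this also avoids any subtlety with leading coefficients, since the only nonzero scalar is $1$.)

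I do not expect a genuine obstacle here: the corollary is essentially the degenerate (zero-quotient) base case of polynomial long division, and the entire content is the bookkeeping needed to match it against Definition 1. The mild ``hard part,'' if any, is being careful that $rem$ is well defined precisely because the division is unique, so that producing \emph{a} valid decomposition suffices to determine the value of $rem$.
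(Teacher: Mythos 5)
Your proposal is correct and is exactly the argument the paper implicitly relies on: the paper states this corollary without proof, and the intended justification is precisely the decomposition $L(x)=0\cdot G(x)+L(x)$ matched against Definition 1. Your added uniqueness check for the quotient--remainder pair is a sensible (and strictly more careful) addition, since Definition 1 silently assumes $rem$ is well defined.
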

\subsubsection{All possible modifier polynomials}
Let $\mathbb{E}(x)$ be the set of all possible modifier polynomials: 
\begin{equation}
\mathbb{E}(x)=\{E_{base}(x)+F(x)G(x) \big\vert F(x)\in \mathbb{F}(x)\}
\label{findallsolution}
\end{equation} 
where $\mathbb{F}(x)$ is the set of all binary polynomials with degree less than $k$. 
 
\subsubsection{Ideal modifier polynomial}
The ideal modifier polynomial $E_{ideal}(x)\in \mathbb{E}(x)$, is the modifier polynomial which causes the least distortion. An example of this step using additive distortion function is shown in the later section as an example.
\subsubsection{Generating modified cover}
Then, the modified cover $\textbf{\em J}'$ is determined as follows: \\

1. Transform the ideal modifier polynomial into the vector form:
\begin{equation}
\sigma^{-1}(E_{ideal}(x))=\textbf{\em E}_{ideal}
\end{equation} \indent
2. Use $\textbf{\em E}_{ideal}$ and the cover $\textbf{\em J}$ to obtain the modified cover $\textbf{\em J}'$:
\begin{equation}
\textbf{\em J}'= \textbf{\em J}\oplus \textbf{\em E}_{ideal}
\end{equation}
where $\oplus$ is a bitwise XOR function. Note that the proposed method can be easily extended to accommodate modification of $\{-1,0,1\}$, as + and - are equivalent operation under $Z_2$. 

\subsection{Extraction}
When $\textbf{\em J}'$ is received, modified binary cover $V'(x)=\sigma(\phi(\textbf{\em J}))$ is recovered. The following is used to extract $M(x)$:
\begin{equation}
M(x)=rem\bigg( \frac{V'(x)}{G(x)}\bigg)
\end{equation}
Then, $\textbf{\em M}=\sigma^{-1}(M(x))$. 
\subsection{Correctness}
The fact that $\mathbb{E}(x) $ represents all possible modifier polynomial and that the embedding and extraction is correct is proven using elementary operations from $Z_2[x]$. To aid with understanding, we provide Lemma 1.

\begin{lemma}
\emph{(Remainder reduction)}
Let $P(x),G(x),L(x)$ be polynomials from $Z_2[x]$.
Then, $rem\big(\frac{G(x)P(x)+L(x)}{G(x)}\big)=rem\big(\frac{L(x)}{G(x)}\big)$
\end{lemma}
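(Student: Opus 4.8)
The plan is to reduce the statement to the defining property of $rem$ given in Definition~1, together with the fact that division with remainder in $Z_2[x]$ is uniquely determined (since $Z_2$ is a field and $G(x)$ may be assumed nonzero). First I would apply the division algorithm to $L(x)$ by itself: write $L(x)=Q(x)G(x)+R(x)$ with $Q(x),R(x)\in Z_2[x]$ and $\deg R(x)<\deg G(x)$, so that by definition $rem\big(\frac{L(x)}{G(x)}\big)=R(x)$.

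Next I would substitute this decomposition into the numerator appearing on the left-hand side and regroup:
\[
G(x)P(x)+L(x)=G(x)P(x)+Q(x)G(x)+R(x)=\big(P(x)+Q(x)\big)G(x)+R(x).
\]
The right-hand side of this display exhibits $G(x)P(x)+L(x)$ in exactly the form required by Definition~1, namely a multiple of $G(x)$ plus a polynomial $R(x)$ whose degree is strictly less than $\deg G(x)$. Hence $rem\big(\frac{G(x)P(x)+L(x)}{G(x)}\big)=R(x)$ as well, and comparing with the first step gives $rem\big(\frac{G(x)P(x)+L(x)}{G(x)}\big)=R(x)=rem\big(\frac{L(x)}{G(x)}\big)$, which is the claim.

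I do not expect a genuine obstacle here: the lemma is essentially the observation that adding a multiple of $G(x)$ to the dividend leaves the remainder unchanged. The one point deserving a word of care is the legitimacy of speaking of \emph{the} remainder — i.e., that $rem$ is well defined — which rests on the uniqueness of the quotient–remainder pair for polynomial division over a field. If one prefers to avoid invoking uniqueness explicitly, an alternative is to note that the difference of the two purported remainders is simultaneously divisible by $G(x)$ and of degree less than $\deg G(x)$, hence is the zero polynomial; but citing uniqueness of the division algorithm is the cleaner route.
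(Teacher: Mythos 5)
Your proof is correct and follows essentially the same route as the paper's: decompose $L(x)=Q(x)G(x)+R(x)$ by the division algorithm, absorb $G(x)P(x)$ into the multiple of $G(x)$, and read off the remainder from Definition 1. Your explicit remark on the uniqueness of the quotient--remainder pair is a small extra touch of rigor the paper omits, but it does not change the argument.
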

\begin{proof}
Suppose $L(x)=G(x)P_1(x)+L_1(x)$, where degree of $L_1(x)$ is less than degree of $G(x)$. Then, 
\begin{eqnarray*}
rem\bigg(\frac{G(x)P(x)+L(x)}{G(x)}\bigg)&=&rem\bigg(\frac{G(x)P(x)+G(x)P_1(x)+L_1(x)}{G(x)}\bigg)\\
&=&rem\bigg(\frac{L_1(x)}{G(x)}\bigg) \text{ [By Definition 1]}\\
&=&L_1(x) \text{ [By Corollary 1]}
\end{eqnarray*}
and $rem\big(\frac{L(x)}{G(x)}\big)=L_1(x)$\\
$\therefore$ $rem\big(\frac{G(x)P(x)+L(x)}{G(x)}\big)=rem\big(\frac{L(x)}{G(x)}\big)$ as required.
\end{proof}

\begin{theorem}
\emph{(Correctness of embedding and extraction)}
Suppose modified binary cover is $V'(x)$ and generator polynomial is $G(x)$, then message polynomial $M(x)= rem\big( \frac{V'(x)}{G(x)}\big)$
\end{theorem}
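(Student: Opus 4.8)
The plan is to trace $V'(x)$ back through the embedding steps and then apply Lemma~1 twice. First I would record the one fact the whole construction rests on: since $\oplus$ on binary vectors is coordinatewise addition modulo $2$ and $\sigma$ sends coordinatewise vector addition to addition in $Z_2[x]$, the modified cover satisfies $V'(x)=V(x)+E_{ideal}(x)$, where $V(x)=\sigma(\phi(\textbf{\em J}))$ is the processed cover polynomial. (The $\phi(\textbf{\em J})$ appearing in the extraction paragraph should read $\phi(\textbf{\em J}')$; with that reading $V'(x)$ is exactly this sum.) Next, by the description of $\mathbb{E}(x)$ in~(\ref{findallsolution}) the chosen modifier is $E_{ideal}(x)=E_{base}(x)+F(x)G(x)$ for some $F(x)$ of degree less than $k$, so $V'(x)=V(x)+E_{base}(x)+F(x)G(x)$.

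Then I would compute the remainder in two moves. Applying Lemma~1 with $P(x)=F(x)$ and $L(x)=V(x)+E_{base}(x)$ discards the $F(x)G(x)$ term, giving $rem\big(\frac{V'(x)}{G(x)}\big)=rem\big(\frac{V(x)+E_{base}(x)}{G(x)}\big)$. It remains to unfold the definition $E_{base}(x)=rem\big(\frac{V(x)-M(x)}{G(x)}\big)$: writing $V(x)-M(x)=G(x)Q(x)+E_{base}(x)$ and using that over $Z_2$ addition is its own inverse (so $-M(x)=M(x)$ and $V(x)+V(x)=0$), one gets $V(x)+E_{base}(x)=M(x)+G(x)Q(x)$. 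A second application of Lemma~1 then reduces this to $rem\big(\frac{M(x)}{G(x)}\big)$. Finally, since $\textbf{\em M}\in Z_2^{n-k}$ the polynomial $M(x)$ has degree at most $n-k-1<n-k=\deg G(x)$, so Corollary~1 yields $rem\big(\frac{M(x)}{G(x)}\big)=M(x)$; chaining the equalities gives $rem\big(\frac{V'(x)}{G(x)}\big)=M(x)$, as required.

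I expect the only genuine friction to be bookkeeping rather than mathematics: confirming that $V'(x)=V(x)+E_{ideal}(x)$ truly holds (that the bit-level XOR together with the mod-$2$ map $\phi$ is consistent with polynomial addition in $Z_2[x]$, including the remark that $+$ and $-$ coincide so a $\{-1,0,1\}$ modification is handled the same way) and that the degree bound on $M(x)$ legitimately invokes Corollary~1. Once those two points are pinned down, the argument is literally two invocations of Lemma~1 and one of Corollary~1, with no nontrivial computation.
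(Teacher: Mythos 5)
Your proof is correct and follows essentially the same route as the paper's: both write the modifier as $E_{base}(x)+F(x)G(x)$, repeatedly invoke Lemma~1 to discard multiples of $G(x)$, and finish with Corollary~1 on $M(x)$. The only difference is cosmetic bookkeeping --- you unfold the defining division $V(x)-M(x)=G(x)Q(x)+E_{base}(x)$ directly and cancel in characteristic~2, whereas the paper carries the intermediate remainder $R(x)$ of $V(x)$ through a nested $rem$ computation; the underlying argument is identical.
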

\begin{proof}
Suppose $E(x)=E_{base}(x)+F(x)G(x)$ for some $F(x)\in \mathbb{F}(x)$, and $V(x)=Q(x)G(x)+R(x)$, where degree of $R(x)$ is smaller than the degree of $G(x)$. Then, \\
\begin{eqnarray*}
rem\bigg(\frac{V'(x)}{G(x)}\bigg)&=&rem\bigg(\frac{V(x)-E(x)}{G(x)}\bigg)\\
&=&rem\bigg(\frac{Q(x)G(x)+R(x)-E_{base}(x)-F(x)G(x)}{G(x)}\bigg)\\
&=&rem\bigg(\frac{R(x)-E_{base}(x)}{G(x)}\bigg) \text{ [By Lemma 1]}\\ 
&=&rem\bigg(\frac{R(x)-rem\big(\frac{V(x)-M(x)}{G(x)}\big)}{G(x)}\bigg)\\
&=&rem\bigg(\frac{R(x)-rem\big(\frac{Q(x)G(x)+R(x)-M(x)}{G(x)}\big)}{G(x)}\bigg)\\
&=&rem\bigg(\frac{R(x)-rem\big(\frac{R(x)-M(x)}{G(x)}\big)}{G(x)}\bigg) \text{ [By Lemma 1]}\\
&=&rem\bigg(\frac{R(x)-(R(x)-M(x))}{G(x)}\bigg) \text{ [By Corollary 1]} \\
&=&rem\bigg(\frac{M(x)}{G(x)}\bigg)\\
&=&M(x) \text{ [By Corollary 1]}
\end{eqnarray*}
as required.

\begin{theorem}
 $\mathbb{E}(x)$ represent the set of all possible $2^k$ modifier polynomials.
\end{theorem}
\end{proof} 
\begin{proof}
There are $2^k$ distinct possible $E(x)\in \mathbb{E}(x)$ by definition, and all of them are valid modifier polynomials by Theorem 1. To prove that there are no other solutions, we use proof by contradiction.\\
Suppose there is a polynomial $Q(x)\notin \mathbb{E}(x)$ with degree less than $n$, but is a valid modifier polynomial. Let $V(x)=P(x)G(x)+R(x)$ and $Q(x)=P_1(x)G(x)+R_1(x)$, where degrees of $R(x)$ and $R_1(x)$ are each less then the degree of $G(x)$. Then, \\
\begin{eqnarray*}
M(x)&=&rem\bigg(\frac{V(x)-Q(x)}{G(x)}\bigg)\\
&=& rem\bigg(\frac{P(x)G(x)+R(x)-P_1(x)G(x)-R_1(x)}{G(x)}\bigg)\\
&=&R(x)-R_1(x)
\end{eqnarray*}
And,  
\begin{eqnarray*}
M(x)&=&rem\bigg(\frac{V(x)-E_{base}(x)}{G(x)}\bigg)\\
&=&R(x)-E_{base}(x)
\end{eqnarray*}

But, this would imply that $R_1(x) = E_{base}(x)$, which would mean \\ $P_1(x)G(x)+E_{base}(x)=Q(x)\in\mathbb{E}(x)$, which is a contradiction. Therefore, $E(x)\in \mathbb{E}(x)$ represent set of all valid modifier polynomials. 
(Note: Theorem 1 and 2 are quite obvious when coset theory is used, but the proofs are provided for the none advanced readers)
\end{proof}
\section{Proposed Code: Low Complexity Distortion Minimization Code}
Global distortion minimization process is computationally and memory intensive when a large code is used for embedding. In this section, a code called Low Complexity Distortion Minimization (LCDM) Code is proposed, which is specifically designed to have low complexity in distortion minimization process. 

$(n,k)$ LCDM code sets $G(x)=1+x^{n-k}$, where $n$ is the length of the cover and $n-k$ is the length of the message bits. The embedding and extraction are exactly the same as presented before, as LCDM is a polynomial code. However, LCDM code is a simple example and more research on developing a new code with restriction on $G(x)$ should be done.

For the rest of the section, the additive distortion minimization process called distortion family finding algorithm is explained and its computational complexity is discussed. 
\subsection{Distortion Minimization Process: Distortion Family Finding Algorithm}
The main idea behind LCDM is to reduce the computation in distortion minimization using the particular structure of $G(x)$ with distortion family finding algorithm (DFFA). Before discussing LCDM, two definitions are defined to help with the understanding of DFFA. 
\begin{definition}
Head polynomials of $E_{base}(x)$ are the single non-zero terms in $E_{base}(x)=x^{h_1}+\ldots+x^{h_i}+\ldots$. For example, if $E_{base}(x)= 1 + x^5$, then $1$ and $x^5$ are the two head polynomials.
\end{definition}
\begin{definition}
In $(n,k)$ LCDM code, cyclic shifts of head polynomials are polynomials which are different multiple of $n-k$ position shifts to a head polynomial $x^{h_i}$: $x^{h_i+n-k},$ $ \ldots ,x^{h_i+L(n-k)}, \ldots$, where $L \leq \frac{n-1-h_i}{n-k}$ is a number of cyclic shifts by $(n-k)$ positions. For example, let $n=15$, $n-k=4$, and $x^{h_i}=x^5$ then, $L\in\{1,2\}$ and cyclic shifts of $x^5$ are $x^{5+4},x^{5+8}$.
\end{definition}

\begin{corollary}
Cyclic shifts of $x^{h_i}$ in $(n,k)$ LCDM code is equivalent to adding $x^{h_i}$ with $\sum_{l=1}^{L}x^{h_i}G(x)x^{(l-1)(n-k)}$.
\end{corollary}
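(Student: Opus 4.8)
The plan is to verify the identity by a short direct computation in $Z_2[x]$, substituting the LCDM generator $G(x) = 1 + x^{n-k}$ and exploiting the cancellation available modulo $2$. First I would pull the common factor $x^{h_i}$ out of the claimed sum, writing $x^{h_i} + \sum_{l=1}^{L} x^{h_i} G(x)\, x^{(l-1)(n-k)} = x^{h_i}\Big(1 + (1+x^{n-k})\sum_{l=1}^{L} x^{(l-1)(n-k)}\Big)$, which is legitimate since $x^{h_i}$ divides every term. Setting $y = x^{n-k}$ for readability, the bracketed factor is $1 + (1+y)\big(1 + y + \cdots + y^{L-1}\big)$.

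The key step is the telescoping identity $(1+y)\big(1 + y + \cdots + y^{L-1}\big) = 1 + y^{L}$, which holds over $Z_2[x]$: expanding gives $\sum_{l=0}^{L-1} y^{l} + \sum_{l=1}^{L} y^{l}$, in which every intermediate power $y, \ldots, y^{L-1}$ occurs exactly twice and therefore vanishes, leaving only $1 + y^{L}$. Substituting back, the bracketed factor becomes $1 + 1 + y^{L} = y^{L}$, again because $1+1 = 0$ in $Z_2$. Reinstating $y = x^{n-k}$ and the factor $x^{h_i}$, the whole expression collapses to $x^{h_i}\, x^{L(n-k)} = x^{h_i + L(n-k)}$, which by the preceding definition is exactly the cyclic shift of $x^{h_i}$ by $L$ blocks of $n-k$ positions; the constraint $L \le \frac{n-1-h_i}{n-k}$ guarantees the degree stays below $n$, so the result lies in the code's ambient space. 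This establishes the claimed equivalence, and one may additionally verify it on the concrete instance of Definition~5 ($n = 15$, $n-k = 4$, $x^{h_i} = x^{5}$), where $L = 1$ yields $x^{5} + x^{5}G(x) = x^{9}$ and $L = 2$ yields $x^{13}$.

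Since the argument is essentially a one-line cancellation, I do not anticipate a genuine obstacle. The only point needing care is performing the geometric-series product over $Z_2[x]$ rather than over the integers, where the intermediate terms would not cancel and the identity would fail; and checking that the summation index $l = 1,\ldots,L$ lines up with the shift count $L$ in the definition, so that no off-by-one discrepancy creeps in between ``number of cyclic shifts'' and ``number of summands.''
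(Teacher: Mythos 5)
Your proposal is correct and follows essentially the same route as the paper: both reduce the sum to $x^{h_i}+x^{h_i+L(n-k)}$ by the telescoping cancellation of adjacent terms in $Z_2[x]$ (you package it as the geometric-series identity $(1+y)(1+y+\cdots+y^{L-1})=1+y^L$ with $y=x^{n-k}$, which is the same cancellation), and then the added $x^{h_i}$ kills the leading term, leaving $x^{h_i+L(n-k)}$. The only cosmetic difference is that you factor out $x^{h_i}$ before cancelling, and you add a degree-bound remark and a numerical check that the paper omits.
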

\begin{proof}
For $(n-k)$ LCDM code, $G(x)=1+x^{n-k}$, therefore \\
\begin{eqnarray}
\sum_{l=1}^{L}x^{h_i}G(x)x^{(l-1)(n-k)}&=&\sum_{l=1}^{L}x^{h_i}(1+x^{n-k})x^{(l-1)(n-k)}\\
&=&(x^{h_i}+x^{h_i+n-k}) + (x^{h_i+n-k} + x^{h_i+2(n-k)}) + \\
& &\ldots +(x^{h_i+(L-1)(n-k)} + x^{h_i+L(n-k)})\\
&=&x^{h_i}+x^{h_i+L(n-k)}
\end{eqnarray}
Therefore $x^{h_i}+\sum_{l=1}^{L}x^{h_i}G(x)x^{(l-1)(n-k)}=x^{h_i+L(n-k)}$, \textit{i.e.,} $L$ cyclic shifts of $x^{h_i}$ as required.
\end{proof}

DFFA works like this:\\
Let $E_{base}(x)=x^{h_1}+\ldots+x^{h_i}+\ldots$.\\
For each $x^{h_i}$, do the following:
\begin{enumerate}
  \item Find the distortion associated with modifying $x^{h_i}$ and its cyclic shifts
  \item Record the position which gives the lowest distortion
\end{enumerate}
Once this is repeated for all head polynomials, let $E_{\text{ideal}}(x)$ be the polynomial with non zero terms corresponding to the positions recorded from step 2. A small example is provided in the next section to aid with understanding.
%

Notice that polynomials within a family are all the same, because by Lemma 1, adding multiples of $G(x)$ will produce the same syndrome as the corresponding head polynomial.
\subsection{Complexity of the distortion minimization}
The computational complexity of DFFA is linear relative to $n$, length of the cover. For a each family, number of comparisons required is equal to the number of polynomials in the family \textit{i.e} $\frac{n}{n-k}$. Since this has to be done for every non-zero coefficients in $E_{base}(x)$, the total comparisons is equal to the product of number of cycles and the number of non-zero coefficients in $E_{base}(x)$. In average, $E_{base}(x)$ will have $\frac{n-k+1}{2}$ number of non-zero coefficients. Therefore, the average case complexity is $\frac{n}{n-k}\times\frac{n-k-1}{2}\doteq\frac{n}{2}$, the worst case complexity is $\frac{n}{n-k}\times(n-k-1)\doteq n$ and the best case complexity is when no modification is required. Thus distortion minimization is linear relative to $n$.
\section{Example of LCDM}
In this section, a small hands on example will be demonstrated to assist understanding.\\ \\
Let $\textbf{\em J} = ( 163, 18, 153, 20, 100, 26, 15, 212, 243, 53, 86 )$, \\
$\textbf{\em M}=(1,0 ,1)$,\\ $G(x)=1+x^3$, and \\
$\textbf{\em D}=(223, 3, 12, 4, 163, 43, 2, 12, 1, 23, 2)$\\ \\
Then, $n=11$, $n-k=3$, \\ $\textbf{\em V}=\phi(\textbf{\em J})=(1,0,1,0,0,0,1,0,1,1,0),$ \\
$V(x)=\sigma(\textbf{\em V})=1+x^2+x^6+x^8+x^9$, and\\  $M(x)=\sigma(\textbf{\em M})=1+x^2$.\\ $\therefore$ $E_{base}(x)=rem(\frac{V(x)-M(x)}{G(x)})=rem(\frac{x^6+x^8+x^9}{1+x^3})=x^2$.  \\ \\
Since $x^2$ is the only non-zero term of the base modifier polynomial, distortion family finding algorithm needs to run for only one family, \textit{i.e} $x^h=x^2$ case. \\ Cyclic shifts of $x^2$ by $n-k=3$ are: 
\begin{enumerate}
\item{$x^2$} 
\item{$x^5=x^2+x^2G(x)$}
\item{$x^8=x^2+x^2G(x)+x^2G(x)x^{3}$}
\end{enumerate}  
The distortions in respective positions are 12, 43, and 1, making $x^8$ the ideal modifier polynomial. Therefore $\textbf{\em E}_{ideal}=\sigma^{-1}(x^8)=(0,0, 0, 0, 0, 0, 0, 0, 1, 0, 0)$ and\\
 $\textbf{\em J}'=\textbf{\em J}  \oplus  E_{\text{ideal}} =(163,18,153,20,100,26,15,212,\textbf{\textcolor{red}{242}},53,86)  $.\\ \\ 
To verify that the intended $M$ can be retrieved from the modified cover $\textbf{\em J}'$:\\
$V'(x)=\sigma^{-1}(\phi(\textbf{\em J}'))=1+x^2+x^6+x^9$\\
Then, $rem(\frac{V'(x)}{G(x)})=rem(\frac{1+x^2+x^6+x^9}{1+x^3})=1+x^2$, and\\ $\sigma^{-1}(1+x^2)=(1,0,1)=\textbf{\em M}$ as required.

\section{Conclusion}
In this paper, an improved embedding technique based in polynomial domain is proposed. We proposed an improved implementation based on polynomial code, which can find all possible solutions with an explicit formula. The space requirement is dwindled down from a matrix with size $(n)\times (n-k)$ to a vector with size $n-k$. A novel embedding code called LCDM, specifically designed for steganography is also presented. The algorithm has linear complexity relative to the cover size, and therefore can be used to embed the whole cover. 


\bibliographystyle{abbrv}
\bibliography{bibliography}  
\end{document}